\newtheorem{lemma}{Lemma}
\begin{document}
%
\title{Cell Association and Handover Management in Femtocell Networks}

\author{\IEEEauthorblockN{Hui Zhou, Donglin Hu, Shiwen Mao, Prathima Agrawal, Saketh Anuma Reddy}
\IEEEauthorblockA{Dept. of Electrical and Computer Engineering, Auburn University, Auburn, AL 36849-5201, USA \\
Email: hzz0016, dzh0003, szm0001, agrawpr, sza0041@auburn.edu}}


%


\maketitle

\begin{abstract}
Although the technology of femtocells is highly promising, many challenging problems should be addressed before fully harvesting its potential. In this paper, we investigate the problem of cell association and handover management in femtocell networks. Two extreme cases for cell association are first discussed and analyzed. Then we propose our algorithm to maximize network capacity while achieving fairness among users. Based on this algorithm, we further develop a handover algorithm to reduce the number of unnecessary handovers using Bayesian estimation. The proposed handover algorithm is demonstrated to outperform a heuristic scheme with considerable gains in our simulation study. 

\end{abstract}


%
\IEEEpeerreviewmaketitle

\section{Introduction}

Due to the nature of open space used as wireless transmission medium, wireless network capacity is largely limited by interference. Mobile users at the border of cellular networks require considerably large transmit power to overcome attenuation, which in return causes interference to other users and reduces network capacity. To address this issue, femtocells provide an effective solution by shortening transmission distance and bringing base stations (BS) closer to mobile users~\cite{Chandrasekhar08}. 

As shown in Fig.~\ref{fig:femtocell}, a femtocell is usually a small celluar network, with a femtocell base station (FBS) connected to macrocell base station (MBS) via broadband wireline. The FBS is usually deployed at the places where the strength of the signal received from MBS is weak (e.g., indoor, the edge of MBS). It is designed to offload MBS traffic and serve the approved users when they are within the coverage. Due to the reduced distance of wireless transmission, femtocell is shown effective in reducing transmit power~\cite{Hu11GC} and improving signal-to-interference-plus-noise ratio (SINR), which lead to prolonging battery life of mobile devices and enhancing network coverage as well capacity~\cite{Chandrasekhar08}.
\begin{figure}[ht]
	\centering
		\includegraphics[width=2.5in]{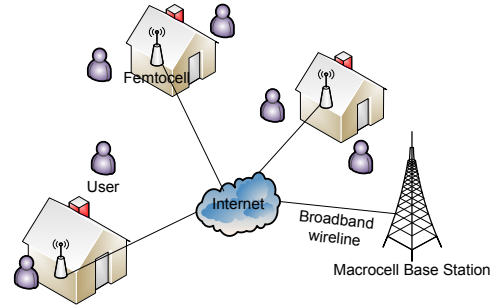}
	\caption{Femtocell networks}
	\label{fig:femtocell}
\end{figure}

Femtocells have attracted significant interest from wireless industry. Major wireless network operators in United States such as AT\&T, Sprint and Verizon, have provided femtocell service plan recently. Although the potential of femtocells is highly promising, a broad range of problems across technical issues, regulatory concerns and economic incentives should be addressed~\cite{Andrews12}. The challenging technical issues include synchronization, cell association, mobility and handover, interference mitigation, network organization, and quality of service (QoS) provisioning~\cite{Hu11IDS}.

In this paper, we investigate the problem of cell association and handover management in femtocell networks. We consider an MBS and multiple FBS's deployed in a femtocell network. The MBS and FBS's cooperatively send data to users in the network through downlink transmission. Each user is allowed to connect to either the MBS or an FBS. Open access mechanism is employed since it is shown significantly higher network capacity than closed access system~\cite{Cheung12}. In open access systems, all users have chance to connect to each FBS. The problem is to decide the assignment between BS's and users with the objective of maximizing network throughput and achieving fairness among users. Besides, when users are in motion, reducing the number of handerovers and handover delay is critical for the success of femtocell technology.
  
We provide an analysis of network downlink capacity for two extreme cases. In case I, each BS selects the best user and total network capacity is maximized. However, in case II, each user chooses the best BS to connect and fairness is achieved among users. Then we propose a new cell association algorithm to find a trade-off between two extreme cases. Based on the cell association algorithm, we further present a handover algorithm as well an access control algorithm for mobile users. 

The remainder of this paper is organized as follows. The related work is discussed in Section~\ref{sec:RelWork}. We present the system model and analyze two extreme cases in Section~\ref{sec:sysMod}. In Section~\ref{sec:PropSchm}, cell association and handover management algorithms are proposed. The proposed algorithms are evaluated in Section~\ref{sec:PerfEva}. Section~\ref{sec:Conc} concludes this paper.

\section{Related Work}\label{sec:RelWork}
Femtocells have received considerable interest from both industry and academia. Technical challenges, regulatory requirements and economic concerns in femtocell networks are comprehensively discussed in~\cite{Chandrasekhar08} and~\cite{Andrews12}. Since FBS's are distributedly deployed and are able to spatially reuse the same spectrum belonging to the MBS, many research efforts were made on interference mitigation by assigning users to the proper BS. In~\cite{Roche10}, the disadvantages of open and closed access mechanisms were discussed and a hybrid access control scheme was introduced. In~\cite{Cheung12}, a stochastic geometric model was introduced to derive the success probability for MBS's and FBS's under open and closed access schemes. 
A learning-based cell selection method for an open access femtocell network was proposed in~\cite{Dhahri12}.  

The main objective of handover algorithm is to determine an optimal connection while minimizing handover latency and reducing unnecessary handovers. In~\cite{Moon10}, an efficient handover algorithm was proposed by considering the optimal combination of received signal strengths from a serving MBS and a target FBS. In~\cite{Jeong11}, a new handover decision algorithm based on mobility pattern and location prediction was developed to reduce the number of unnecessary handovers due to temporary femtocell visitors. Both signal strength and velocity were considered for the proposed handover algorithm in~\cite{Wu09}. A hybrid access scheme and a femtocell-initiated handover procedure with adaptive threshold were studied in~\cite{Fan10}.
  
\section{System Model}\label{sec:sysMod}
We consider a femtocell network with an MBS (indexed $0$) and $M$ FBS's (indexed from $1$ to $M$), which is illustrated in Fig.~\ref{fig:femtocell}. The MBS and $M$ FBS's are connected to the Internet via broadband wireline connection, where $N$ mobile users are randomly located inside the macrocell coverage area. We assume the MBS and $M$ FBS's are well synchronized and they occupy the same spectrum at the same time to send data to mobile users.

Let $P_0$ be the MBS transmit power and $h_{0,k}$ be the channel gain between the MBS and $k$-th user. Likewise, $P_i$ and $h_{i,k}$ where $i\ge 1$ denote the transmit power of the $i$-th FBS as well as the channel gain between the $i$-th FBS and $k$-th user. We assume an additional white Gaussian noise (AWGN) at mobile users with power density $\sigma^2$. The capacity at the $k$-th user from its serving MBS is given by:
\begin{eqnarray}\label{eq:C_MBS}
C_k=\frac{B}{N_0}\log_2\left(1+\frac{|h_{0,k}|^2P_0}{\sigma^2+I_{0,k}}\right)
\end{eqnarray}
where $B$ is the network bandwidth, $N_0$ is the number of MBS users, and $I_{0,k}=\sum_{i=1}^M|h_{i,k}|^2P_i$ is the interference from FBS's. We assume the bandwidth is equally allocated to all served users. The capacity at the $j$-th user from the $i$-th FBS is given by:
\begin{eqnarray}\label{eq:C_FBS}
C_j=\frac{B}{N_i}\log_2\left(1+\frac{|h_{i,j}|^2P_i}{\sigma^2+I_{i,j}}\right)
\end{eqnarray}
where $N_i$ is the number of users served by the $i$-th FBS and $I_{i,j}=\sum_{l=0,l\neq i}^M|h_{l,j}|^2P_l$ is the interference from the MBS and other FBS's.

By combining (\ref{eq:C_MBS}) and (\ref{eq:C_FBS}), we have the following equation for the capacity at the $j$-th user from the $i$-th BS:
\begin{eqnarray}\label{eq:C_User}
C_j&=&\frac{B}{N_i}\log_2\left(1+\frac{|h_{i,j}|^2P_i}{\sigma^2+I_j-|h_{i,j}|^2P_i}\right) \nonumber \\
&=&\frac{B}{N_i}\log_2\left(\frac{\sigma^2+I_j}{\sigma^2+I_j-|h_{i,j}|^2P_i}\right) \nonumber \\
&=&\frac{B}{N_i}\log_2\left(\frac{1}{1-\eta_{i,j}}\right)
\end{eqnarray}
where $I_j=\sum_{i=0}^M|h_{i,j}|^2P_i$ is the sum of received power from its serving BS and interference from other BS's, and $\eta_{i,j}=|h_{i,j}|^2P_i/(\sigma^2+I_j)$ is SINR, which is the percentage of desired power in $I_j$. Note that $I_j$ does not depend on which BS the user is connected to, and it is a constant for any BS. 

\subsection{Case I: Network Capacity}
Initially, our objective is to maximize the total network capacity. By denoting $U_i$ as the set of users connected to the $i$-th BS, we have $N_i=|U_i|$. Then, by using (\ref{eq:C_User}), the objective function can be expressed as:
\begin{eqnarray}\label{eq:C_tot}
\mbox{Maximize: } C_{tot}=\sum_{i=0}^M \frac{B}{N_i}\sum_{j\in U_i}\log_2\left(\frac{1}{1-\eta_{i,j}}\right).
\end{eqnarray}
The optimal solution to the problem above is that each BS chooses one user with the highest SINR to connect. 
This solution is able to achieve the highest network throughput by assigning only one best user to each BS. The rest of users are not allowed to access the network. This solution is unfair and inefficient because only a small portion of users are served. With this scheme, this system can only accommodate at most $M+1$ (the number of BS's) users. 


\subsection{Case II: User Fairness}
To achieve fairness among users, we divide the bandwidth equally and allocate them to all users connected to the same BS. Then, a straightforward heuristic solution is proposed that each user $j$ chooses a BS with the highest SINR to connect. However, this approach may incur the QoS problems, especially when all users choose the same BS to connect. Each user is assigned with a very small bandwidth which leads to extremely low capacity. for each user. On the other hand, the users with low SINR from any of BS's may jeopardize the total network throughput~\cite{Hu10JSAC}. Obviously, blocking these users can improve the total network capacity. To guarantee the minimum QoS requirements of each user and maximize the total network capacity, only users with SINR above $\lambda_1$ are allowed to access network and each BS is able to serve at most $N_{max}$ users. 

\section{Proposed Scheme}\label{sec:PropSchm}
From our previous analysis in case I, we find that the total network capacity is maximized if each BS chooses only one user with the highest SINR. However, this scheme is not fair for the other users because they do not have chance to be served. Although the scheme in case II is fair, the network capacity is very low. Therefore, we want to find a trade-off between network performance and fairness.

\subsection{User Classification}
Before introducing our scheme, we adopt $q$ thresholds $\lambda_i$'s to divide SINR's into $q+1$ levels:
\begin{eqnarray}\label{eq:LDef}
L_{i,j}=\left\{\begin{array}{l l}
 0, &  \eta_{i,j}<\lambda_1 \\
 l, &  \lambda_{l}\le\eta_{i,j}<\lambda_{l+1},l\in \{1,\cdots,q-1\} \\
 q,& \eta_{i,j}\ge\lambda_{l+1}
\end{array}\right. 
\end{eqnarray} 
According to $L_{i,j}$, the users are divided into $q+1$ groups. Our idea is to group these users and let them connect to the same BS. Since the SINR values of the users in the same group are very close, we can replace individual SINR with the average value. Then, the objective function in (\ref{eq:C_tot}) can be rewritten as:
\begin{eqnarray}\label{eq:C_tot_lev}
C_{tot}&=&\sum_{i=0}^M \frac{B}{N_i}\sum_{l=0}^q\sum_{j\in U_{i,l}}\log_2\left(\frac{1}{1-\eta_{i,j}}\right) \nonumber \\
&\approx& \sum_{i=0}^M B\sum_{l=0}^q\frac{N_{i,l}}{N_i}\log_2\left(\frac{1}{1-\overline{\eta}_{i,l}}\right)
\end{eqnarray}
where $U_{i,l}=\{j|j\in U_i,L_{i,j}=l\}$, $N_{i,l}=|U_{i,l}|$ is the number of users in $U_{i,l}$ and $\overline{\eta}_{i,l}$ is the average value of SINR in $U_{i,l}$. If we denote $C_{NC}$ and $C_{UC}$ as the total network capacity achieved by case I and proposed scheme, respectively, we have the following lemma.

\begin{lemma}
$C_{NC}$ is a upper bound of $C_{UC}$.
\end{lemma}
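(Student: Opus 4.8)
The plan is to establish $C_{UC}\le C_{NC}$ by decomposing both quantities over the $M+1$ base stations and proving the inequality separately for each index $i$, then summing. Since Case~I assigns to each BS the single user of highest SINR (so $N_i=1$ there), its contribution from BS~$i$ is $B\log_2\bigl(1/(1-\eta_i^{\ast})\bigr)$, where $\eta_i^{\ast}=\max_j \eta_{i,j}$ is the largest SINR attainable at BS~$i$ over all users. Writing $C_{NC}=\sum_{i=0}^M B\log_2\bigl(1/(1-\eta_i^{\ast})\bigr)$ and reading off the per-BS contribution to $C_{UC}$ from the approximate expression in~(\ref{eq:C_tot_lev}), namely $B\sum_{l=0}^q \frac{N_{i,l}}{N_i}\log_2\bigl(1/(1-\overline{\eta}_{i,l})\bigr)$, reduces the lemma to the single-BS claim that this weighted sum does not exceed $B\log_2\bigl(1/(1-\eta_i^{\ast})\bigr)$.

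First I would record two elementary facts. The weights $N_{i,l}/N_i$ are nonnegative and satisfy $\sum_{l=0}^q N_{i,l}/N_i=1$, since $\sum_l N_{i,l}=N_i$; thus the per-BS term of $C_{UC}$ is a convex combination of the values $\log_2\bigl(1/(1-\overline{\eta}_{i,l})\bigr)$. Second, each group average is dominated by the global maximum, $\overline{\eta}_{i,l}=\frac{1}{N_{i,l}}\sum_{j\in U_{i,l}}\eta_{i,j}\le \eta_i^{\ast}$, because every summand is at most $\eta_i^{\ast}$ by definition. The map $f(\eta)=\log_2\bigl(1/(1-\eta)\bigr)$ is increasing on $[0,1)$, so $f(\overline{\eta}_{i,l})\le f(\eta_i^{\ast})$ for every level $l$. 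Replacing each term of the convex combination by the common upper bound $f(\eta_i^{\ast})$ and using that the weights sum to one yields the per-BS inequality, and summing over $i=0,\dots,M$ gives $C_{UC}\le C_{NC}$.

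The main point requiring care is that $C_{UC}$ in~(\ref{eq:C_tot_lev}) is the \emph{approximate} capacity, in which individual SINRs within a group are replaced by their average; the argument above is phrased directly in terms of this approximate formula, so no reconciliation with the exact throughput is needed, and I do not expect any genuine obstacle. As a consistency check one could instead invoke Case~I being the global maximizer of~(\ref{eq:C_tot}): since the proposed scheme is itself a feasible assignment of users to BS's, its exact $C_{tot}$ cannot exceed $C_{NC}$, and Jensen's inequality for the convex $f$ shows the averaged formula only underestimates the exact per-group throughput, so $C_{UC}$ again lies below $C_{NC}$. The direct monotonicity argument is shorter; the only thing to verify is that the weights sum to one and that every group-average SINR is bounded by the per-BS maximum $\eta_i^{\ast}$.
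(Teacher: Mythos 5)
Your proof is correct, but it takes a genuinely different route from the paper's. The paper does not compare against the per-BS maximum SINR at all: its entire argument is the AM--GM inequality applied within each SINR group, $\prod_{j\in U_{i,l}}(1-\eta_{i,j})\le(1-\overline{\eta}_{i,l})^{N_{i,l}}$, followed by taking reciprocals and logarithms to obtain $\sum_{j\in U_{i,l}}\log_2\bigl(1/(1-\eta_{i,j})\bigr)\ge N_{i,l}\log_2\bigl(1/(1-\overline{\eta}_{i,l})\bigr)$. That is, the paper effectively treats $C_{NC}$ as the exact objective (\ref{eq:C_tot}) evaluated on the \emph{same} user assignment (the first line of (\ref{eq:C_tot_lev})) and shows only that group-averaging the SINRs underestimates the true capacity; this is precisely the Jensen-type ``consistency check'' you relegate to your closing paragraph, and it is what supports the paper's subsequent remark that the two quantities are \emph{close} when SINRs within a group are similar. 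Your main argument instead reads $C_{NC}$ literally, as the prose suggests, as the Case~I capacity $\sum_{i}B\log_2\bigl(1/(1-\eta_i^{\ast})\bigr)$, and bounds the convex combination $\sum_l (N_{i,l}/N_i)\,f(\overline{\eta}_{i,l})$ by $f(\eta_i^{\ast})$ via monotonicity of $f$ and the facts that the weights sum to one and each group average is dominated by the per-BS maximum. This is valid and arguably proves the statement the lemma literally asserts (something the paper's AM--GM step alone does not do without also invoking that Case~I maximizes (\ref{eq:C_tot})), but it yields no tightness information. The only caveat worth flagging is that you implicitly assume Case~I can give every BS its own best user simultaneously (so that $C_{NC}=\sum_i B f(\eta_i^{\ast})$); if a single user is the SINR-maximizer for two BS's this expression could overestimate the achievable Case~I capacity, though the paper itself makes the same implicit assumption.
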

\begin{proof}
Although it is obvious that $C_{NC}$ is greater than $C_{UC}$, we provide detailed mathematic analysis proof. Due to the fact that arithmetic mean is always equal to or greater than geometric mean and the equality holds if all numbers are equal, we have the following inequality:
\begin{eqnarray}\label{eq:Ineq}
\prod_{j\in U_{i,l}}(1-\eta_{i,j})\le (1-\overline{\eta}_i)^{N_{i,l}}
\end{eqnarray}
By taking inversion and logarithm to the both sides of the inequality above, we have: 
\begin{eqnarray}\label{eq:C_UB}
\sum_{j\in U_{i,l}}\log_2 \left(\frac{1}{1-\eta_{i,j}}\right)\ge N_{i,l}\log_2 \left(\frac{1}{1-\overline{\eta}_{i,l}}\right)
\end{eqnarray}
Therefore, the lemma holds.
\end{proof}  
 
Although $C_{NC}$ is a upper bound of $C_{UC}$, their values are very close due to the fact that the SINR values of the users from the same group fall within the same range and are close to each other. Therefore we can adopt (\ref{eq:C_tot_lev}) as objective function instead of (\ref{eq:C_tot}). Obviously, to maximize $C_{UC}$, the number of the users at the $q+1$ level $N_{i,q}$ should be equal to $N_i$ since $\overline{\eta}_{i,l}>\overline{\eta}_{i,l-1}$ for any $0<l\le q$. Then, we can merge users that do not connect to BS's into one group and divide users into three groups with $q=2$. In the first group, the SINR's of these users are too low to allow them to connect to any of the BS's. Comparatively, each user in the third group is connected to one of the BS's. The rest of users in the second group are candidate that will be admitted to the BS's when BS's have available resource to allocate. Then, we rewrite $C_{UC}$ as:
\begin{eqnarray}\label{eq:C_UC}
C_{UC}=\max \sum_{i=0}^M B\log_2\left(\frac{1}{1-\overline{\eta}_{i,2}}\right)
\end{eqnarray}
Note that $C_{UC}$ only depends on the average SINR of users in the third class.


Until now, we assume all BS's adopt the same $\lambda$ to classify the users. By considering the distribution of users and BS, we denote $\lambda_l^i$ as the threshold adopted by BS $i$ to push users onto different BS's. To simplify the analysis, we assume $\lambda_1^i=\lambda_1$ for all BS's. 
An algorithm with objective of finding a trade-off between network capacity and user fairness is presented in table~\ref{tab:CellAssoc}. In step $2$, the users with all SINRs below the threshold $\lambda_1$ are removed from user set. From step $4$ to step $16$, each BS finds its candidate user set according to thresholds $\lambda_2^i$. Then, each user in the candidate user set can make its candidate BS list accordingly. The user on the candidate user list is allowed to choose the best BS from its candidate BS list. Once the BS is assigned with $N_{max}$ user, it is removed and not available for the rest of users. In steps $17-19$, the BS's adjust its $\lambda_2^i$ according to the number of users that have already been assigned. If the number of assigned users is small, the threshold $\lambda_2^i$ is reduced with large step-size $\Delta$. Once the execution of algorithm is completed, the BS with larger $\lambda_2^i$ usually has higher network capacity due to the higher average SINR.

\begin{table}[ht]
\begin{center}
\caption{Cell Association Algorithm}
\begin{tabular}{ll}
\hline
1: & Initialize $\eta_{i,j}$, $\lambda_1^i$, $\lambda_2^i$, $\mathcal{F}$, $\mathcal{U}$ and $\Delta$ \\
2: & Remove users $\{j|\eta_{i,j} < \lambda_1, \forall i\}$ from $\mathcal{U}$ \\
3: & While $\mathcal{F}$ is not empty and $\mathcal{U}$ is not empty\\
4: & $\;\;$ Find candidate user set: $V_i=\{j|j\in \mathcal{U}, \eta_{i,j} \ge \lambda_2^i\}$ \\
5: & $\;\;$ If $\cup_{i=0}^M V_i$ is empty \\
6: & $\;\;\;\;$ The algorithm is terminated \\
7: & $\;\;$ End if \\
8: & $\;\;$ Find candidate BS set: $W_j=\{i|i\in \mathcal{F}, j\in V_i, \forall j\in \cup_{i=0}^M V_i\}$ \\
9: & $\;\;$ For $j \in \cup_{i=0}^M V_i$ \\
10: & $\;\;\;\;$ Find the $i^\ast$ BS: $i^\ast=\arg\max_{i\in W_j}\eta_{i,j}$  \\
11: & $\;\;\;\;$ Add user $j$ to $U_{i^\ast}$ and set $a_{i^\ast,j}=1$ \\
12: & $\;\;\;\;$ Remove user $j$ from $\mathcal{U}$ \\
13: & $\;\;\;\;$ If $N_{i^\ast}=N_{max}$\\
14: & $\;\;\;\;\;\;$ Remove the BS $i^\ast$ from $\mathcal{F}$ and all $W_j$'s \\
15: & $\;\;\;\;$ End if\\
16: & $\;\;$ End for\\
17: & $\;\;$ For $i \in \mathcal{F}$ \\
18: & $\;\;\;\;$ Adjust $\lambda_2^i=\max\{\lambda_2^i-|N_{max}-N_i|\Delta,\lambda_1^i\}$ \\
19: & $\;\;$ End for\\
20: & End while\\
\hline
\end{tabular}
\label{tab:CellAssoc}
\end{center}
\end{table}  


\subsection{Handover Algorithm}
Previously, we discuss the cell association when all users are not in motion. Now, we consider user mobility in this section. Since all users may travel from one cell to another, an efficient handover algorithm is essential to handle this issue. Before presenting our algorithm, we have to introduce several notations used in the algorithm. First, we denote $\Omega_i$ as the coverage of the $i$-th BS, in which the received power from the $i$-th BS is dominant among all received power from all BS's. $\pi_i$ is denoted as the probability that user is in the coverage of the $i$-th BS. Since the coverage of femtocell is very small, the probability of user in the coverage of MBS, $\pi_0$, is much higher than the other $\pi_i$'s ($i\neq 0$). In addition, we define a conditional probability $\epsilon_i=\Pr(\eta_{i',j}>\eta_{i,j}|Loc(j)\in\Omega_i)$ as the probability that SINR from the $i'$-th BS is greater than that from the $i$-th BS conditioned on user $j$ is in the coverage of the $i$-th BS where $Loc(j)$ is the location of the user $j$. Then, we collect SINR information from all BS's for $T$ times and count the times that SINR from the BS $i$ is less than those from the other BS's, denoted by $n_i$. Thus, SINR's are compared for totally $M\times T$ times. With comparison results of SINR, denoted by $\Theta$, we can adopt Bayesian estimation to estimate the posterior probability that user $j$ is in the coverage of BS $i$ as:
\begin{eqnarray}\label{eq:Prob_BS}
Q_i&=&\Pr(Loc(j)\in \Omega_i|\Theta) \nonumber \\
&=&\frac{\Pr(\Theta|Loc(j)\in \Omega_i)\pi_i}{\sum_{i=0}^M\Pr(\Theta|Loc(j)\in \Omega_i)\pi_i} \nonumber \\
&=&\frac{\epsilon_i^{n_i}(1-\epsilon_i)^{MT-n_i}\pi_i}{\sum_{i=0}^M\epsilon_i^{n_i}(1-\epsilon_i)^{MT-n_i}\pi_i}.
\end{eqnarray}

The proposed handover algorithm is presented in table~\ref{tab:Handover}. In steps $3-4$, the connection between user and BS is terminated because the SINR requirement at user $j$ cannot be satisfied by BS $i$. Note that $\overline{\eta}_{i,j}$ is the average SINR over $T$ times. In steps $5-12$, we compute the posterior probability $Q_i$ and find available BS's with $Q_i$ above a predefined threshold $\Gamma$. Among all available BS's, user $j$ chooses a best $BS$ to connect.
 
\begin{table}[ht]
\begin{center}
\caption{Handover Algorithm for User $j$ connecting to BS $i$}
\begin{tabular}{ll}
\hline
1: & While TRUE \\
2: & $\;\;$ Collect SINR $\eta_{i,j}^t$ ($t=1,\cdots,T$) from all BS's\\
3: & $\;\;$ If $\overline{\eta}_{i,j}<\lambda_1$\\
4: & $\;\;\;\;$ Break the connection between $i$ and $j$ \\
5: & $\;\;$ Else if $\overline{\eta}_{i,j}<\lambda_2^i$   \\
6: & $\;\;\;\;$ Compute posterior probability $Q_i$ \\
7: & $\;\;\;\;$ $W_j=\{i|N_i<N_{max} \mbox{ and } Q_i\ge \Gamma \mbox{ and } \overline{\eta}_{i,j}\ge \lambda_2^i\}$\\
8: & $\;\;\;\;$ If $W_j$ is not empty \\
9: & $\;\;\;\;\;\;$ Find the BS $i^\ast=\arg\max_{i\in W_j} Q_i$\\
10: & $\;\;\;\;\;\;$ Break the connection with BS $i$ and connect to the BS $i^\ast$ \\
11: & $\;\;\;\;$ End if\\
12: & $\;\;$ End if\\
13: & End while\\
\hline
\end{tabular}
\label{tab:Handover}
\end{center}
\end{table}  

\subsection{Admission Algorithm}
According to the handover algorithm proposed before, handover procedure does not always succeed due to low SINR or busy BS. The connect of served users may be dropped. Therefore, the problem of letting users admitted or readmitted to the femtocell network should be addressed. To solve this problem, we present an admission algorithm in table~\ref{tab:Admin}. It is similar to the proposed handover algorithm expect that the users do not need to break the connect with previous BS.
\begin{table}[b]
\begin{center}
\caption{Admission Algorithm for Inactive User $j$ }
\begin{tabular}{ll}
\hline
1: & While TRUE \\
2: & $\;\;$ Collect SINR $\eta_{i,j}^t$ ($t=1,\cdots,T$) from all BS's\\
3: & $\;\;$ Compute posterior probability $Q_i$ \\
4: & $\;\;$ $W_j=\{i|N_i<N_{max} \mbox{ and } Q_i\ge \Gamma \mbox{ and } \overline{\eta}_{i,j}\ge \lambda_2^i\}$\\
5: & $\;\;$ If $W_j$ is not empty \\
6: & $\;\;\;\;$ Find the BS $i^\ast=\arg\max_{i\in W_j} Q_i$\\
7: & $\;\;\;\;$ Connect to the $i^\ast$ BS \\
8: & $\;\;$ End if\\
9: & End while\\
\hline
\end{tabular}
\label{tab:Admin}
\end{center}
\end{table}  

\section{Performance Evaluation}\label{sec:PerfEva}
We evaluate the performance of the proposed cell association and handover algorithms using MATLAB. Each point in the following figures is the average of $10$ simulation runs. The $95\%$ confidence intervals are plotted for each point. We adopt the similar channel models used in~\cite{Moon10}.The values of channel gain from the BS's can be expressed as: 
\begin{eqnarray}\label{eq:RSS} 
&&10\log_{10}\left(|h_{i,j}(t)|^2\right) \nonumber\\
&=&-PL_i(t)-u_i(t)\nonumber\\
&=&-d_i-c_i\log_{10}\left[d_{i,j}(t)\right]-u_i(t) \nonumber
\end{eqnarray}
where $d_i$ and $c_i$ are two constants for path loss model $PL_i$, $d_{i,j}(t)$ is the distance from user $i$ to BS $j$ at time $t$ and $u_i(t)$ represents shadowing effect which is normally distributed with mean zero and variance $\delta_i$. The simulation parameters are listed in table~\ref{tb:Parameter}.
\begin{table} 
\caption{Simulation Parameters}
\begin{tabular}{l|l}
\hline
{\em Symbol} & {\em Definition} \\
\hline
$M=9$ & The number of femtocells \\
$B=10 \mbox{ MHz}$ & Total network bandwidth \\
$P_0=43 \mbox{ dBm}$ & Transmit power of the MBS \\
$P_i=31.5 \mbox{ dBm}$ & Transmit power of the $i$-th FBS \\
$PL_0=28+35\log_{10}(d)$ & Path loss model for MBS\\
$PL_i=38.5+20\log_{10}(d)$ & Path loss model for FBS \\
$\delta_0,\delta_i=6 \mbox{ dB}$ & Shadowing effects for MBS and FBS\\
$N_{max}=10$ & Maximum number of users per BS\\
\hline
\end{tabular}
\label{tb:Parameter}
\end{table}

For the proposed cell association algorithm, we compare it with the two straightforward schemes discussed in Section~\ref{sec:sysMod}:
\begin{itemize}
	\item Scheme $1$ based on maximizing network capacity: each BS chooses the user with the highest SINR to connect. 
	\item Scheme $2$ based on fairness among users: each user connects to the BS from which it can receive the highest SINR. 
\end{itemize}


In Fig.~\ref{fig:NetCapVSNumUser}, we examine the impact of the number of users on the total network capacity. We increase $N$ from $20$ to $100$ with step-size $20$, and plot the total network capacity. We find that the total network capacity increases with the number of users because the probability that BS's choose a user with better SINR to connect becomes higher as the number of users grows larger. As expected, scheme $1$ achieves the highest network capacity, while network capacity in scheme $2$ is the lowest since the users with poor SINR jeopardize the total network performance by occupying a portion of network bandwidth. The network capacity of the proposed scheme is almost as twice as scheme $2$ when the number of users is close to $100$. Although the network capacity of the proposed scheme is about one half of that of scheme $1$, note that the number of served users in the proposed scheme achieves as $N_{max}=10$ times as that in scheme $1$.

\begin{figure}[ht]
	\centering
		\includegraphics[width=2.5in]{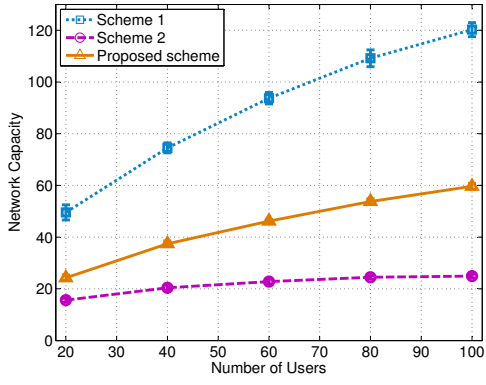}
	\caption{Total network capacity vs. number of users}
	\label{fig:NetCapVSNumUser}
\end{figure}

Then, we adopt Raj Jain's fairness index to investigate the impact of number of users on fairness among users. Jain's equation is given by:
\begin{eqnarray}
\mathcal{J}(C_1,C_2,\cdots,C_N)=\frac{(\sum_{j=1}^NC_j)^2}{N\times\sum_{j=1}^NC_j^2}
\end{eqnarray}
where $C_j$ is the network throughput for user $j$. The value of the index ranges from $1/N$ (worst case) to $1$ (best case). It can be seen from Fig.~\ref{fig:NumHOVSNumUser} that fairness indexes decrease with the number of users. Scheme $1$ has the lowest fairness index since it only serves at most $M+1$ users. However, scheme $2$ obtains the highest fairness index among three schemes because every user in scheme $2$ has chance to connect to BS. Although the proposed scheme is not as fair as scheme $2$, their fairness indexes are very close when the number of users is around $20$. 
\begin{figure}[ht]
	\centering
		\includegraphics[width=2.5in]{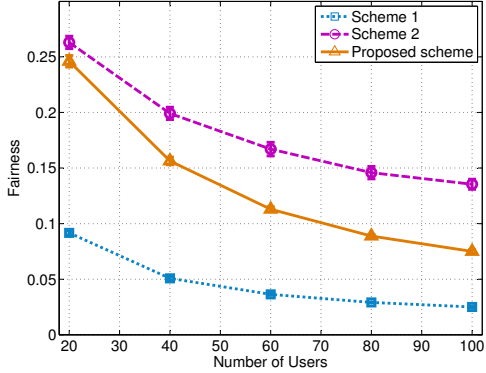}
	\caption{Fairness vs. number of users}
	\label{fig:FairnessVSNumUser}
\end{figure}

Next, we evaluate the performance of our proposed handover algorithm by applying random walk model to each user. Both velocity and direction of mobile users are uniformly distributed within $[0, 8.3]$ m/s and $[0, 2\pi]$, respectively. Since we do not find any similar schemes in the literature, we compare the proposed scheme with a heuristic scheme: Once the average SINR $\overline{\eta}_{i,j}$ falls below threshold $\lambda_2^i$, user $j$ will choose a best available BS to connect.


In Fig.~\ref{fig:NumHOVSNumUser}, we show the impact of number of users on the average number of handovers. We increase $N$ from $20$ to $100$ with step-size $20$. We find that when the number of users is less than $60$, the average number of handovers in the heuristic scheme grows larger with the number of users. It is due to the fact that the more users, the more frequently handovers take place. Once the number of users gets beyond $60$, the average number of handovers decreases because the probability of  finding available BS gets smaller. However, the average number of handovers in our proposed scheme is much lower than the that of heuristic scheme and decreases slowly with the number of users.     

%
\begin{figure}[ht]
	\centering
		\includegraphics[width=2.5in]{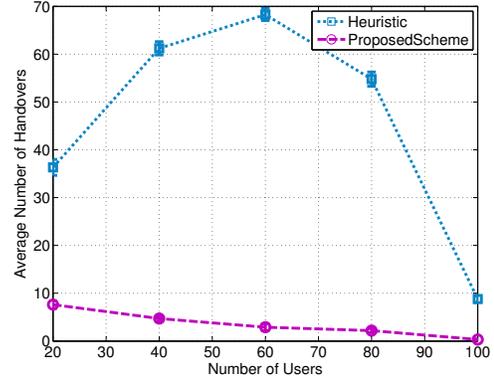}
	\caption{Average number of handovers vs. number of users}
	\label{fig:NumHOVSNumUser}
\end{figure}

Finally, we examine the impact of maximum allowed number of user per BS on the average number of handovers in Fig.~\ref{fig:NumHOVSNmax}. We increase $N_{max}$ from $2$ to $10$ with step-size $2$. When $N_{max}$ is below $4$, the average number of handovers is close to $0$ for both heuristic and proposed scheme because all BS are busy and users are not allowed to connect to the new BS. Beyond this critical point, the average number of handovers in proposed scheme is significantly reduced compared with heuristic scheme.  

\begin{figure}[ht]
	\centering
		\includegraphics[width=2.5in]{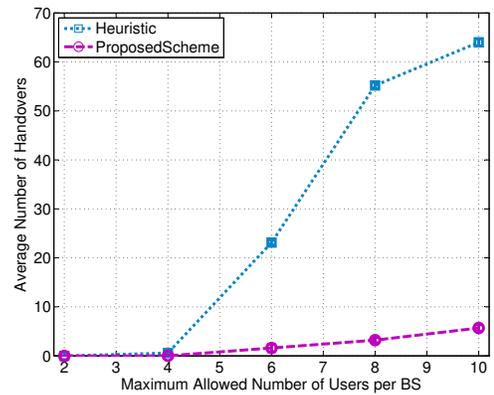}
	\caption{Average number of handovers vs. $N_{max}$}
	\label{fig:NumHOVSNmax}
\end{figure}

\section{Conclusion}\label{sec:Conc}
In this paper, we investigated the problem of cell association and handover management in femtocell networks consisting of an MBS and multiple FBS's. We first proposed a cell association algorithm with the objective of seeking a trade-off point between network capacity and fairness. Based on this algorithm, we presented a handover algorithm for mobile users. Both cell association and handover algorithm were evaluated with simulations. The handover algorithm was shown to outperform a heuristic scheme with considerable gains.


\section*{Acknowledgment}
This work is supported in part by the U.S. National Science Foundation
(NSF) under Grants CNS-1145446 and CNS-1247955, and through the NSF Wireless Internet Center for Advanced Technology at Auburn University.

\bibliographystyle{IEEEtran}
\bibliography{handoff_femto}

\end{document}